\def\hksqrt{\mathpalette\DHLhksqrt}
\def\DHLhksqrt#1#2{\setbox0=\hbox{$#1\sqrt{#2\,}$}\dimen0=\ht0
\advance\dimen0-0.2\ht0
\setbox2=\hbox{\vrule height\ht0 depth -\dimen0}%
{\box0\lower0.4pt\box2}}
\newcommand{\hsqrt}[1]{\hksqrt{#1}}
\newcommand{\BibPath}{/home/milan/Documents/University/Research/BibTeX}
\newcommand{\Rl}{\mathbb{R}}
\newcommand{\eq}{\triangleq}
\newcommand{\eexp}[1]{\textrm{e}^{#1}}
\newcommand{\ejw}{(\eexp{j\w})}
\newcommand{\sigsq}{\sigma^{2}}
\newcommand{\w}{\omega}
\newcommand{\Q}{\mathcal{Q}}
\newcommand{\abs}[1]{\left| #1\right|}
\newcommand{\set}[1]{\{ #1 \}}
\newcommand{\sumfromto}[2]{\sum\nolimits_{#1}^{#2}}
\newcommand{\Intfromto}[2]{\int\limits_{#1}^{#2}}
\newcommand{\intpipi}[1]{\frac{1}{2\pi}\!\!\int_{-\pi}^{\pi} {#1} d\w}
\newcommand{\expo}[1]{\textrm{e}^{#1}}
\newcommand{\pipi}{[-\pi,\pi]}
\newcommand{\forallwinpipi}{\, \forall \w \in [-\pi,\pi]}
\newcommand{\tr}{\textrm{tr}}
\newcommand{\bA}{\boldsymbol{A}}
\newcommand{\bD}{\boldsymbol{D}}
\newcommand{\bF}{\boldsymbol{F}}
\newcommand{\bI}{\boldsymbol{I}}
\newcommand{\bK}{\boldsymbol{K}}
\newcommand{\bL}{\boldsymbol{L}}
\newcommand{\bM}{\boldsymbol{M}}
\newcommand{\bT}{\boldsymbol{T}}
\newcommand{\Irate}[2]{\bar{I}(\set{#1_{k}};\set{#2_{k}})}
\newtheorem{thm}{Theorem}
\newtheorem{lem}{Lemma}
\newtheorem{rem}{Remark}
\begin{document}

\title{Achieving the Quadratic Gaussian Rate-Distortion Function for Source Uncorrelated Distortions}
\author{
\authorblockN{Milan S. Derpich,
Jan {\O}stergaard, and
Daniel E. Quevedo}
\authorblockA{School of Electrical Engineering and Computer Science, The University of Newcastle, NSW 2308, Australia\\
\authorblockA{ milan.derpich@studentmail.newcastle.edu.au, jan.ostergaard@newcastle.edu.au,
  dquevedo@ieee.org
}
}
}

\maketitle
\begin{abstract}
We prove achievability of the recently characterized quadratic Gaussian rate-distortion function (RDF) subject to the constraint that the distortion is uncorrelated to the source. This result is based on shaped dithered lattice quantization in the limit as the lattice dimension tends to infinity and holds for all positive distortions.
It turns out that this uncorrelated distortion RDF can be realized causally. This feature, which stands in contrast to Shannon's RDF, is illustrated by causal transform coding. 
Moreover, we prove that by using feedback noise shaping the uncorrelated distortion RDF can be achieved causally and with memoryless entropy coding.
Whilst achievability relies upon infinite dimensional quantizers, we prove that the rate loss incurred in the finite dimensional case can be upper-bounded by the space filling loss of the quantizer and, thus, is at most 0.254 bit/dimension. 
%

  
\end{abstract}
%
%
\section{Introduction}\label{sec:intro}
Shannon's rate-distortion function $R(D)$ for a stationary zero-mean Gaussian source $X$ with memory and under the MSE fidelity criterion can be written in a parametric form (the reverse water-filling solution)~\cite{gallag68}
\begin{subequations}\label{eq:ShannonsRDF}
\begin{align}
 R(D)&= \frac{1}{2\pi}\int_{\omega: S_X(\omega)>\theta} \frac{1}{2}\log\left(\frac{S_X(\omega)}{\theta}\right)\\
 D &= \frac{1}{2\pi}\int_{-\pi}^{\pi} S_Z(\omega) \, d\omega,\label{eq:D_shannon}
\end{align}
where $S_X(\omega)$ denotes the \emph{power spectral density} (PSD) of $X$ and the distortion PSD $S_Z(\omega)$ is given by
%
\begin{equation}
S_Z(\omega) = \begin{cases}
\theta, & \text{if}\ S_X(\omega) > \theta \\
S_X(\omega), & \text{otherwise}.
\end{cases}
\end{equation}
\end{subequations}
The water level $\theta$ is chosen such that the distortion constraint~\eqref{eq:D_shannon} is satisfied.

It is well known that in order to achieve Shannon's RDF in the quadratic Gaussian case, the distortion must be independent of the output.
This clearly implies that the distortion must be \emph{correlated} to the source.

Interestingly, many well known source coding schemes actually lead, by construction, to source-uncorrelated distortions. %
In particular, this is the case when the source coder satisfies the following two conditions: 
a) The linear processing stages (if any) achieve \emph{perfect reconstruction} (PR) in the absence of quantization; 
b) the quantization error is uncorrelated to the source.
The first condition is typically satisfied by PR filterbanks~\cite{vaidya93}, transform coders~\cite{goyal01} and feedback quantizers~\cite{jaynol84}.
The second condition is met when subtractive (and often when non-subtractive) dither quantizers are employed~\cite{grasto93}.
Thus, any PR scheme using, for example, subtractively dithered quantization, leads to source-uncorrelated distortions.

An important fundamental question, which was raised by the authors in a recent paper~\cite{derost08}, is: ``What is the impact on Shannon's rate-distortion function, when we further impose the constraint that the end-to-end distortion must be uncorrelated to the input?''

In~\cite{derost08}, we formalized the notion of $R^\perp(D)$, which is the quadratic rate-distortion function subject to the constraint that the distortion is uncorrelated to the input. 
For a Gaussian  source $X\in\Rl^{N}$, we defined $R^{\perp}(D)$ as~\cite{derost08}
\begin{equation}
R^{\perp}(D) \eq \min_{%
\substack{Y: \mathbb{E}[X(Y-X)^T]=\boldsymbol{0}, \\ 
\frac{1}{N} tr(\boldsymbol{K}_{Y-X}) \leq D,  
\frac{1}{N}|\boldsymbol{K}_{Y-X}|^{\frac{1}N} > 0}}
\tfrac{1}{N}
I(X ; Y),
\end{equation}
where the notation $\bK_{X}$ denotes the covariance matrix of $X$ and $\abs{\cdot}$ refers to the determinant.
For zero mean Gaussian stationary sources, we showed in~\cite{derost08} that the above minimum (in the limit when $N\to\infty$)
satisfies the following equations:
%
\begin{subequations}\label{eq:Rperp_equations}
\begin{align}
R^\perp(D) 
&= 
\frac{1}{2\pi} \!\int_{-\pi}^{\pi} \!\!\log\!\left(\!\!\frac{\hsqrt{S_X(\omega)+\alpha} + \hsqrt{S_X(\omega)}}{\hsqrt{\alpha}}\!\right)  d\omega\label{eq:RperpProcDef}\\
D &= \frac{1}{2\pi} \int_{-\pi}^{\pi} S_{Z}(\w)  d\omega,\nonumber
\end{align}
\text{where}
\begin{equation}\label{eq:Sz_def}
S_Z(\omega) \!= \!\frac{1}{2}\!\left(\! \hsqrt{S_X(\omega)\!+\!\alpha} - \hsqrt{S_X(\omega)}\right)\!\hsqrt{S_X(\omega)}, \; \forall \w,
\end{equation}
\end{subequations}
is the PSD of the optimal distortion, which needs to be Gaussian.
Notice that here the parameter $\alpha$ (akin to $\theta$ in~\eqref{eq:ShannonsRDF}) does not represent a ``water level''.
Indeed, unless $X$ is white, the PSD of the optimal distortion for $R^{\perp}(D)$ is not white, \emph{for all $D>0$}.
\footnote{Other similarities and differences between $R^{\perp}(D)$ and Shannon's $R(D)$ are discussed in~\cite{derost08}.}

In the present paper we prove achievability of $R^\perp(D)$ by constructing coding schemes based on dithered lattice quantization, which, in the limit as the quantizer dimension approaches infinity, are able to achieve $R^\perp(D)$ for any positive $D$. 
We also show that $R^{\perp}(D)$ can be realized causally, i.e.,
that for all Gaussian sources and for all positive distortions one can build forward test channels that realize $R^{\perp}(D)$ without using non-causal filters.
This is contrary to the case of Shannon's rate distortion function $R(D)$, where at least one of the filters of the forward test channel that realizes $R(D)$ needs to be non-causal~\cite{gallag68}.
To further illustrate the causality of
$R^{\perp}(D)$,
we present a causal transform coding architecture that realizes it.
We also show that the use of feedback noise-shaping allows one to achieve $R^{\perp}(D)$ with memoryless entropy coding. 
This parallels a recent result by Zamir, Kochman and Erez for $R(D)$~\cite{zamkoc08}.
We conclude the paper by showing that, in all the discussed architectures, the
rate-loss (with respect to $R^{\perp}(D)$) when using a finite-dimensional quantizer can be upper bounded by the space-filling loss of the quantizer.
Thus, for any Gaussian source with memory, by using noise-shaping and scalar dithered quantization,
the \emph{scalar}  entropy (conditioned to the dither) of the quantized output exceeds $R^{\perp}(D)$ by at most 0.254 bit/dimension.


\section{Background on Dithered Lattice Quantization}\label{sec:background}
A randomized lattice quantizer is a lattice quantizer with subtractive dither $\nu$, followed by entropy encoding. 
The dither $\nu \sim \mathcal{U}(V_0)$ is uniformly distributed over a Voronoi cell $V_0$ of the lattice quantizer.
Due to the dither, the quantization error is truly independent of the input. Furthermore, it was shown in~\cite{zamfed92} that the coding rate of the quantizer, i.e.\
\begin{equation}
R_{\Q_{N}} \eq  
\tfrac{1}{N}H(\Q_{N}(X+\nu)|\nu)
\end{equation}
can be written as the mutual information between the input and the output of an additive noise channel $Y'=X+E'$, where $E'$ denotes the channel's additive noise and is distributed as $-\nu$. 
More precisely, 
$R_{\Q_{N}}=\frac{1}{N}I(X;Y')=\frac{1}{N}I(X;X+E')$ and the quadratic distortion per dimension is given by $\frac{1}{N}\mathbb{E}\|Y'-X\|^2 = \frac{1}{N}\mathbb{E}\|E'\|^2$.

It has furthermore been shown that when $\nu$ is white 
there exists a sequence of lattice quantizers $\{\Q_{N}\}$ where the quantization error (and therefore also the dither) tends to be approximately Gaussian distributed (in the divergence sense) for large $N$. 
Specifically, let $E'$ have a probability distribution (PDF) $f_{E'}$, and let $E'_G$ be Gaussian distributed with the same mean and covariance as $E'$. 
Then $\lim_{N\rightarrow\infty}\frac{1}{N}D(f_{E'}(e) \| f_{E'_G}(e)) = 0$ with a convergence rate of $\frac{\log(N)}{N}$ if the sequence $\{\Q_{N}\}$ is chosen appropriately~\cite{zamfed96}.

In the next section we will be interested in the case where the dither is not necessarily white. 
By shaping the Voronoi cells of a lattice quantizer whose dither $\nu$ is white, we also shape $\nu$, obtaining a colored dither $\nu'$.
This situation was considered in detail in~\cite{zamfed96} from where we obtain the following lemma (which was proven in~\cite{zamfed96} but not put into a lemma).
\begin{lem}\label{lem:shapedlattice}
Let $E\sim \mathcal{U}(V_0)$ be white, i.e.\ $E$ is uniformly distributed over the Voronoi cell $V_0$ of the lattice quantizer $\Q_{N}$ and $\boldsymbol{K}_E=\epsilon \bI$. 
Furthermore, let $E'\sim \mathcal{U}(V'_0)$, where 
$V'_0$ denotes the shaped Voronoi cell
$V'_0 = \{x\in \mathbb{R} : \boldsymbol{M}^{-1}x \in V_0\}$ and $\bM$ is some invertible linear transformation.
Denote the covariance of $E'$ by $\boldsymbol{K}_{E'} = \boldsymbol{M}\boldsymbol{M}^T\epsilon$.  
Similarly, let $E_G\sim \mathcal{N}(\boldsymbol{0},\boldsymbol{K}_{E_G})$ having covariance matrix
$\boldsymbol{K}_{E_G} = \bK_{E}$ and let $E'_G\sim \mathcal{N}(\boldsymbol{0},\boldsymbol{K}_{E'_G})$ where $\boldsymbol{K}_{E'_G} = \boldsymbol{K}_{E'}$. 
Then there exists a sequence of shaped lattice quantizers such that
%
\begin{equation}
\tfrac{1}{N}D(f_{E'}(e) \| f_{E'_G}(e)) = \mathcal{O}\left({\log(N)}/{N}\right).
\end{equation}
%
\end{lem}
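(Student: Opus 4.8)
The plan is to reduce the shaped (coloured-dither) case to the white-dither case already recalled before the lemma, by exploiting the fact that the Kullback--Leibler divergence is invariant under invertible linear changes of coordinates, and then to quote the white-dither estimate of~\cite{zamfed96} verbatim. Concretely, I would show that $E'$ and $E'_G$ are the images of $E$ and $E_G$ under the same map $\bM$, and that carrying $\bM$ through the divergence integral leaves it unchanged.

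First I would observe that $E'\sim\mathcal{U}(V'_0)$ with $V'_0=\{x:\bM^{-1}x\in V_0\}=\bM V_0$ is exactly the pushforward of $E\sim\mathcal{U}(V_0)$ under the invertible linear map $\bM$: the image of a uniform law on $V_0$ is the uniform law on $\bM V_0$, with density $f_{E'}(e)=\abs{\det\bM}^{-1}f_E(\bM^{-1}e)$. Likewise, since $E_G\sim\mathcal{N}(\bzero,\epsilon\bI)$ we have $\bM E_G\sim\mathcal{N}(\bzero,\epsilon\bM\bM^T)=\mathcal{N}(\bzero,\boldsymbol{K}_{E'_G})$, so $E'_G$ has the same distribution as $\bM E_G$, with density $f_{E'_G}(e)=\abs{\det\bM}^{-1}f_{E_G}(\bM^{-1}e)$. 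Both $E$ and $E_G$ have densities, so all the objects below are well defined.

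Next I would substitute $e=\bM\tilde e$ in the divergence integral $\int f_{E'}(e)\log\frac{f_{E'}(e)}{f_{E'_G}(e)}\,de$. The Jacobian factor $\abs{\det\bM}$ from $de=\abs{\det\bM}\,d\tilde e$ cancels the $\abs{\det\bM}^{-1}$ in the outer density $f_{E'}$, while the two $\abs{\det\bM}^{-1}$ factors in numerator and denominator cancel inside the logarithm; this yields
\begin{equation}
D\!\left(f_{E'}\,\|\,f_{E'_G}\right)=D\!\left(f_{E}\,\|\,f_{E_G}\right),
\end{equation}
and hence $\tfrac1N D(f_{E'}\|f_{E'_G})=\tfrac1N D(f_{E}\|f_{E_G})$. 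I would then invoke the result recalled immediately before the lemma (established in~\cite{zamfed96}): there exists a sequence of lattice quantizers $\{\Q_{N}\}$, chosen so that their Voronoi cells become ``Gaussian-like'', for which $\tfrac1N D(f_{E}\|f_{E_G})=\mathcal{O}(\log(N)/N)$. Composing each $\Q_{N}$ with the linear map $\bM$ gives the sought sequence of shaped lattice quantizers, and the bound transfers directly.

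The only point I would treat with care is the bookkeeping in the change-of-variables step and the fact that in the intended applications $\bM$ is really the (length-$N$ truncated) shaping filter, hence may depend on $N$; one must make sure the $\mathcal{O}(\log(N)/N)$ constant supplied by~\cite{zamfed96} does not degrade under shaping. But because the displayed divergence identity is exact and holds for every invertible $\bM$, the right-hand side — and therefore the rate of convergence — never depends on $\bM$ at all, so this is immediate: the white-dither construction can be fixed first, and the shaping applied afterwards with no loss. This is what makes the lemma essentially a corollary of the white-dither case rather than a genuinely new estimate.
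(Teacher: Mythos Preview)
Your argument is correct and matches the paper's own proof: both reduce to the white-dither case by invoking the invariance of the KL divergence under the invertible linear map $\bM$, identifying $E'=\bM E$ and $E'_G=\bM E_G$, and then quoting the $\mathcal{O}(\log(N)/N)$ rate from~\cite{zamfed96}. The paper states the invariance more tersely (via the transformation rule $h(E')=h(E)+\log_2|\bM|$ for differential entropy), whereas you carry out the Jacobian bookkeeping explicitly, but the substance is identical.
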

\begin{proof}
The divergence is invariant to invertible transformations since 
$h(E')=h(E)+\log_2(|\boldsymbol{M}|)$. Thus, $D(f_{E'}(e) \| f_{E'_G}(e)) = D(f_{\boldsymbol{M}E}(e)\|f_{\boldsymbol{M}E_G}(e)) = D(f_{E}(e)\|f_{E_G}(e))$ for any $N$. 
\end{proof}

\section{Achievability of $R^{\perp}(D)$}
The simplest forward channel that realizes $R^{\perp}(D)$ is shown in Fig.~\ref{fig:forwartdtc}.
According to~\eqref{eq:Rperp_equations}, all that is needed for the mutual information per dimension between $X$ and $Y$ to equal $R^{\perp}(D)$ is that $Z$ be Gaussian with PSD equal to the right hand side (RHS) of~\eqref{eq:Sz_def}.
\begin{figure}[htp]
 \centering
 \input{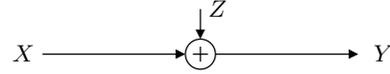}
 \caption{Forward test channel}
 \label{fig:forwartdtc}
\end{figure}

In view of the asymptotic properties of randomized lattice quantizers discussed in Section~\ref{sec:background},
the achievability of~$R^{\perp}(D)$ can be shown by replacing the test channel of Fig.\ref{fig:forwartdtc} by an adequately \emph{shaped} $N$-dimensional randomized lattice quantizer $\Q_{N}'$ and then letting $N\rightarrow \infty$.
In order to establish this result, the following lemma is needed.
\begin{lem}\label{lem:excessrate}
\emph{
Let $X$, $X'$, $Z$ and $Z'$  be mutually independent random vectors.
Let $X'$ and $Z'$ be arbitrarily distributed, and let $X$ and $Z$ be Gaussian having the same mean and covariance as $X'$ and $Z'$, respectively.
Then
%
\begin{align}
 I(X';X'+Z') \leq I(X;X+Z) + D(Z'\Vert Z).
\end{align}
}
\end{lem}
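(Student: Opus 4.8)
The plan is to expand both mutual informations in terms of differential entropies and to compare them term by term. Write $I(X';X'+Z') = h(X'+Z') - h(X'+Z'\mid X') = h(X'+Z') - h(Z')$, using independence of $X'$ and $Z'$; similarly $I(X;X+Z) = h(X+Z) - h(Z)$. Hence the claimed inequality is equivalent to
\begin{equation}
h(X'+Z') - h(X+Z) \leq h(Z') - h(Z) + D(Z'\Vert Z).\nonumber
\end{equation}
Now observe that $Z$ is Gaussian with the same covariance as $Z'$, so the maximum-entropy property of the Gaussian gives $h(Z) \geq h(Z')$, but more precisely the \emph{exact} gap is the divergence: $h(Z) - h(Z') = D(Z'\Vert Z)$. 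This is the standard identity $D(f_{Z'}\Vert f_{Z}) = h(Z) - h(Z')$ valid whenever $f_Z$ is the Gaussian with the same mean and covariance as $Z'$ (it follows because $-\mathbb{E}_{Z'}[\log f_Z(Z')]$ depends on $Z'$ only through its second moments, which $Z$ shares). Therefore the right-hand side above collapses to exactly $h(Z) - h(Z') + D(Z'\Vert Z) = 2D(Z'\Vert Z) - (h(Z)-h(Z')) + (h(Z)-h(Z'))$... more directly: $h(Z') - h(Z) + D(Z'\Vert Z) = 0$. So what remains to prove is simply
\begin{equation}
h(X'+Z') \leq h(X+Z),\nonumber
\end{equation}
i.e., that among sums $X'+Z'$ of independent vectors with prescribed covariances, the Gaussian sum has the largest differential entropy — which is immediate from the maximum-entropy property, since $X'+Z'$ has the same covariance as $X+Z$ (covariances add under independent summation), and $X+Z$ is Gaussian.

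The key steps, in order, are: (1) use independence to reduce each mutual information to $h(\text{output}) - h(\text{noise})$; (2) invoke the divergence-as-entropy-gap identity $h(Z) - h(Z') = D(Z'\Vert Z)$, which exploits that $Z$ is the covariance-matched Gaussian; (3) cancel the noise-entropy terms with the divergence term, leaving the output-entropy comparison; (4) close with the maximum-entropy inequality $h(X'+Z')\le h(X+Z)$, using that independent summation preserves the covariance and that $X+Z$ is Gaussian. One should also note for rigor that $X$ and $Z$ are each individually covariance-matched Gaussians and mutually independent, so $X+Z$ is genuinely Gaussian with covariance $\boldsymbol{K}_{X'} + \boldsymbol{K}_{Z'}$.

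I do not anticipate a serious obstacle; the only point requiring care is the identity in step (2), which needs $Z$ to be Gaussian with \emph{exactly} the mean and covariance of $Z'$ (guaranteed by hypothesis) and needs $D(Z'\Vert Z)$ to be finite for the rearrangement to be meaningful — if it is infinite the stated inequality holds trivially. A secondary subtlety is ensuring the relevant differential entropies are well defined (not $-\infty$ or $+\infty$); this is handled by noting that if $h(X'+Z')$ or $h(Z')$ fails to exist finitely the inequality is either trivial or vacuous, and otherwise all manipulations are legitimate. Thus the proof is essentially a two-line entropy computation once the right decomposition is chosen.
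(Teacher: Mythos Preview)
Your proposal is correct and follows essentially the same route as the paper: expand $I(X';X'+Z')=h(X'+Z')-h(Z')$, use the identity $D(Z'\Vert Z)=h(Z)-h(Z')$ for a covariance-matched Gaussian $Z$, and finish with the maximum-entropy bound $h(X'+Z')\le h(X+Z)$. The paper writes the last step as dropping the nonnegative term $D(X'+Z'\Vert X+Z)=h(X+Z)-h(X'+Z')$, which is exactly your maximum-entropy inequality.
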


\begin{proof}
\begin{align*}
 I(&X';X'+Z') 
\overset{\hphantom{(a)}}{=} h(X'+Z') - h(Z')\\
&\overset{(a)}{=} h(X+Z) - h(Z) + D(Z'\Vert Z) - D(X'+Z'\Vert X + Z)\\
&\overset{\hphantom{(b)}} {\leq} I(X;X+Z) + D(Z'\Vert Z),
\end{align*}
where $(a)$ stems from the well known result $D(X'\Vert X) = h(X) - h(X')$, see, e.g.,~\cite[p.~254]{covtho06}.
\end{proof}

We can now prove the achievability of $R^\perp(D)$.\\

\begin{thm}\label{thm:achievable}
\emph{For a source $X$ being an infinite length Gaussian random vector with zero mean,
$R^\perp(D)$ is achievable.}
\end{thm}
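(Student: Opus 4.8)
The plan is to realize the forward test channel of Fig.~\ref{fig:forwartdtc} by a sequence of shaped $N$-dimensional randomized lattice quantizers $\Q'_N$, and to show that the resulting coding rate converges to $R^\perp(D)$ as $N\to\infty$. First I would recall, from Section~\ref{sec:background}, that for a randomized (subtractively dithered) lattice quantizer the coding rate per dimension equals $\tfrac1N I(X;X+E')$, where the additive channel noise $E'$ is distributed as $-\nu$ and is truly independent of $X$, while the distortion per dimension is $\tfrac1N\mathbb{E}\|E'\|^2$. The design step is then to choose, for each finite $N$, the shaping transformation $\bM=\bM_N$ (equivalently, the shaped Voronoi cell $V'_0$) so that $\bK_{E'}=\bM\bM^T\epsilon$ matches, as closely as a lattice of dimension $N$ permits, a Toeplitz covariance whose associated PSD is the optimal distortion PSD $S_Z(\w)$ given by~\eqref{eq:Sz_def}. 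Since $E'$ is independent of $X$ by construction, the source-uncorrelatedness constraint defining $R^\perp(D)$ is automatically satisfied for every $N$; this is the conceptual reason the argument works.

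The core estimate is to bound the gap between the finite-$N$ coding rate and $R^\perp(D)$. Apply Lemma~\ref{lem:excessrate} with $X'=X$ (already Gaussian), $Z'=E'$, and $Z=E'_G$ the Gaussian with covariance $\bK_{E'}$: this gives $\tfrac1N I(X;X+E') \le \tfrac1N I(X;X+E'_G) + \tfrac1N D(E'\Vert E'_G)$. By Lemma~\ref{lem:shapedlattice}, the shaped lattice sequence can be chosen so that $\tfrac1N D(E'\Vert E'_G)=\mathcal{O}(\log(N)/N)\to0$. The remaining term $\tfrac1N I(X;X+E'_G)$ is the mutual information of a Gaussian channel with a Gaussian independent noise of covariance $\bK_{E'}$; as $N\to\infty$, with $\bK_X$ and $\bK_{E'}$ asymptotically Toeplitz with PSDs $S_X(\w)$ and $S_Z(\w)$, Szeg\H{o}'s theorem on the limiting eigenvalue distribution of Toeplitz matrices converts the $\tfrac1N\log\det$ expression into $\tfrac1{2\pi}\int_{-\pi}^{\pi}\tfrac12\log\!\big(1+S_X(\w)/S_Z(\w)\big)\,d\w$. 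A direct computation using~\eqref{eq:Sz_def} shows this integral equals exactly the expression~\eqref{eq:RperpProcDef} for $R^\perp(D)$ (and the distortion constraint $\tfrac1N\mathbb{E}\|E'\|^2\to\tfrac1{2\pi}\int S_Z(\w)\,d\w=D$ holds by the same Toeplitz limit). A matching lower bound $\tfrac1N I(X;X+E')\ge R^\perp(D)-o(1)$ follows because any such scheme is feasible in the minimization defining $R^\perp(D)$, so the limit is pinned to $R^\perp(D)$.

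The main obstacle is the simultaneous control of two asymptotics: we need a single sequence $\{\Q'_N\}$ whose shaped Voronoi cells are "good" in the sense of Lemma~\ref{lem:shapedlattice} (so the divergence vanishes) \emph{and} whose covariances $\bK_{E'}=\bM_N\bM_N^T\epsilon$ track the prescribed Toeplitz structure with PSD $S_Z(\w)$ closely enough (including the overall scaling $\epsilon=\epsilon_N\to0$) that both the rate integral and the distortion integral converge to the right limits. Concretely, one must exhibit $\bM_N$ — e.g. as a truncated, normalized spectral factor of $S_Z$ — verify that $\tfrac1N\log\det(\bI+\bK_X\bK_{E'}^{-1})$ and $\tfrac1N\tr(\bK_{E'})$ converge to their Szeg\H{o} limits despite the edge effects of truncation, and confirm that the lattice-goodness result of~\cite{zamfed96} survives the particular shaping used. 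I would handle the edge effects by standard Toeplitz/circulant approximation arguments, and the compatibility of shaping with lattice-goodness by invoking the construction of~\cite{zamfed96} directly, which already permits arbitrary linear shaping $\bM$. Finally I would note that $S_X(\w)+\alpha$ and $S_Z(\w)$ are bounded away from $0$ for $\alpha>0$ (hence $D>0$), so all the logarithms and inverses above are well defined, completing the argument for every positive $D$.
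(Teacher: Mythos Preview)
Your proposal follows the same core strategy as the paper: build a shaped $N$-dimensional subtractively dithered lattice quantizer, use Lemma~\ref{lem:excessrate} to bound the coding rate by the Gaussian mutual information plus $\tfrac1N D(E'\Vert E'_G)$, and invoke Lemma~\ref{lem:shapedlattice} to make the divergence term vanish. The difference lies in how you handle the Gaussian term $\tfrac1N I(X;X+E'_G)$. You shape $\bK_{E'}$ to a Toeplitz matrix approximating the limiting PSD $S_Z(\w)$ of~\eqref{eq:Sz_def} and then appeal to Szeg\H{o}'s theorem to pass to the integral~\eqref{eq:RperpProcDef}; this is what forces you to manage truncation edge effects and the ``simultaneous control of two asymptotics'' you flag as the main obstacle. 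The paper sidesteps all of this by shaping the lattice so that $\bK_{E'^{(N)}}$ equals the \emph{exact finite-$N$ optimal} noise covariance $\bK_{Z^{(N)}}=\tfrac12\sqrt{\bK_{X^{(N)}}^{2}+\alpha\bK_{X^{(N)}}}-\tfrac12\bK_{X^{(N)}}$ (which is not Toeplitz in general). With that choice, $\tfrac1N I(X^{(N)};X^{(N)}+E'^{(N)}_G)$ is, by construction, the $N$-dimensional $R^\perp(D)$ for $X^{(N)}$, so the rate gap is \emph{exactly} $\tfrac1N D(E'^{(N)}\Vert E'^{(N)}_G)$ and no Szeg\H{o} argument or edge-effect analysis is needed; convergence of the finite-$N$ $R^\perp(D)$ to the process version is already handled in~\cite{derost08}. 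Your route is workable but heavier than necessary. One small correction: your closing claim that $S_Z(\w)$ is bounded away from $0$ for $\alpha>0$ is not right in general, since~\eqref{eq:Sz_def} vanishes wherever $S_X(\w)$ does; the quantity $1+S_X(\w)/S_Z(\w)$ does stay bounded, so this is repairable, but it is one more wrinkle the paper's direct approach avoids.
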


\begin{proof}
Let $X^{(N)}$ be the sub-vector containing the first $N$ elements of $X$.
For a fixed distortion $D=\tr(\boldsymbol{K}_{Z^{(N)}})/N$, the 
average mutual information per dimension $\frac{1}{N}I({X^{(N)}};{X^{(N)}}+{Z^{(N)}})$ is minimized when ${X^{(N)}}$ and ${Z^{(N)}}$ are jointly Gaussian and 
\begin{align}
\boldsymbol{K}_{Z^{(N)}} = \frac{1}{2}\sqrt{\boldsymbol{K}_{X^{(N)}}^2 + \alpha\boldsymbol{K}_{X^{(N)}}}-\frac{1}{2}\boldsymbol{K}_{X^{(N)}}, 
\end{align}
see~\cite{derost08}.
%
Let the $N$-dimensional shaped randomized lattice quantizer $\Q'_{N}$ be such that the dither is distributed as $-{{E'}^{(N)}}\sim\mathcal{U}(V'_0)$,
with $\bK_{E'^{(N)}}=\bK_{Z^{(N)}}$.
It follows that the coding rate of the quantizer is given by $R_{\Q_{N}} = \frac{1}{N}I({X^{(N)}};{X^{(N)}} + {{E'}^{(N)}})$.
The rate loss due to using $\Q_{N}$ to quantize ${X^{(N)}}$ is given by
\begin{align}
 R_{\Q_{N}}(D) - R^{\perp}(D) 
&=  \tfrac{1}{N}\Big[ I({X^{(N)}};{X^{(N)}}+{{E'}^{(N)}}) \nonumber\\
&\quad- I({X^{(N)}};{X^{(N)}}+{{E'}^{(N)}}_G)\Big] \nonumber\\
&\overset{(a)}{\leq} \tfrac{1}{N}D(f_{{{E'}^{(N)}}}(e)\|f_{{{E'_G}^{(N)}}}(e)),\label{eq:middle}
\end{align}
where $f_{{{E'_G}^{(N)}}}$ is the PDF of the Gaussian random vector ${{E'_G}^{(N)}}$, independent of ${{E'}^{(N)}}$ and ${X^{(N)}}$, and having the same first and second order statistics as ${{E'}^{(N)}}$.
In~\eqref{eq:middle}, inequality~$(a)$ follows directly from Lemma~\ref{lem:excessrate}, since the use of subtractive dither yields  the error ${{E'}^{(N)}}$ independent of ${X^{(N)}}$.

To complete the proof, we invoke Lemma~\ref{lem:shapedlattice}, which guarantees that the RHS of~\eqref{eq:middle} vanishes as $N\rightarrow \infty$.
\end{proof}

\begin{rem}
 \begin{enumerate}
  \item For zero mean stationary Gaussian random sources, $R^{\perp}(D)$ is achieved by taking $X$ in Theorem~\ref{thm:achievable} to be the complete input process. 
For this case, as shown in~\cite{derost08}, the Fourier transform of the autocorrelation function of $Z^{(N)}$ tends to the RHS of~\eqref{eq:Sz_def}.
\item For vector processes, the achievability of $R^{\perp}(D)$ follows by building $X$ in Theorem~\ref{thm:achievable} from the  concatenation of infinitely many consecutive vectors.
\item Note that if one has an infinite number of parallel scalar random processes, $R^{\perp}(D)$ can be achieved \emph{causally} by forming $X$ in Theorem~\ref{thm:achievable} from the $k$-th sample of each of the processes and using entropy coding after $\Q$.
 \end{enumerate}
\end{rem}

The fact that $R^{\perp}(D)$ can be realized causally is further illustrated in the following section.

\section{Realization of $R^{\perp}(D)$ by Causal Transform Coding }\label{sec:realiz_TC}
We will next show that for a Gaussian random vector $X\in\Rl^{N}$ with positive definite covariance matrix $\bK_{X}$, $R^{\perp}(D)$ can be realized by  \emph{causal} transform coding~\cite{habher74,pholin00}.
A typical transform coding architecture is shown in Fig.~\ref{fig:Causal_TCNF}.
In this figure, $\bT$ is an $N\times N$ matrix, and $W$ is a Gaussian vector, independent of $X$, with covariance matrix $\bK_{W}=\sigsq_{W}\bI$.
The system clearly satisfies the perfect reconstruction condition $Y=X + \bT^{-1}W$.
The reconstruction error is the Gaussian random vector $Z\eq Y-X$, and the MSE is 
$D=\frac{1}{N}\tr\set{\bK_{Z}}$, where 
$
\bK_{Z} = \sigsq_{W} \bT^{-1} \bT^{-T}
$.

\begin{figure}[htp]
 \centering
 \input{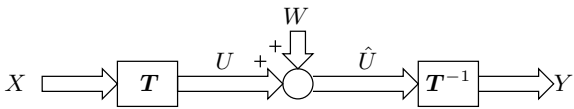}
 \caption{Transform coder.}
 \label{fig:Causal_TCNF}
\end{figure}

By restricting $\bT$ to be  lower triangular, the transform coder in Fig.~\ref{fig:Causal_TCNF} becomes causal, in the sense that $\forall k\in\set{1,..,N}$, the $k$-th elements of $U$ and $\hat{U}$ can be determined using just the first $k$ elements of $X$ and the $k$-th element of $W$.

To  have 
$
\frac{1}{N}I(X;Y)=R^{\perp}(D)
$,
 it is necessary and sufficient that 
%
\begin{align}
 \bT^{-1} \bT^{-T} = \bK_{Z^{\star}}/\sigsq_{W} \label{eq:bTbT_NF},
\end{align}
where the covariance matrix of the optimal distortion is~\cite{derost08}
\begin{align}
 \bK_{Z^{\star}}\eq 
\frac{1}{2}\hsqrt{\boldsymbol{K}_{X}^2 + \alpha\boldsymbol{K}_{X}}-\frac{1}{2}\boldsymbol{K}_{X}.\label{eq:Kzstardef2}
\end{align}

%
Since $\bT^{-1}$ is lower triangular,~\eqref{eq:bTbT_NF} is the Cholesky decomposition of $\bK_{Z^{\star}}/\sigsq_{W}$, which always exists.%
\footnote{Furthermore,  since $\bK_{Z^{\star}}>0$, there exists a unique $\bT$ having only positive elements on its main diagonal that satisfies~\eqref{eq:bTbT_NF}, see~\cite{horjoh85}.}
Thus, $R^{\perp}(D)$ can be realized by causal transform coding.

In practice, transform coders are implemented by replacing the (vector) AWGN channel $\hat{U}=V+W$ by a quantizer (or several quantizers) followed by entropy coding.
The latter process is simplified if the quantized outputs are independent.
When using quantizers with subtractive dither, this can be shown to be equivalent to having $\frac{1}{N}\sumfromto{k=1}{N}I(\hat{U}_{k}-W_{k};\hat{U}_{k})=\frac{1}{N} I(U;\hat{U})$
in the transform coder when using the AWGN channel.
Notice that, since $\bT$ in~\eqref{eq:bTbT_NF} is invertible, the  mutual information per dimension
$\frac{1}{N} I(U;\hat{U})$ is also equal to $R^{\perp}(D)$.
By the chain rule of mutual information we have
%
\begin{align}
\frac{1}{N}\sumfromto{k=1}{N}I(\hat{U}_{k}-W_{k};\hat{U}_{k}) 
 \geq \frac{1}{N} I(U;\hat{U}) = R^{\perp}(D),\label{eq:Ineq_key}
\end{align}
with equality iff the elements of $\hat{U}$ are mutually independent.
If $\hat{U}$ is Gaussian, this is equivalent to $\bK_{\hat{U}}$ being  diagonal.
Clearly, this cannot be obtained with the architecture shown in Fig.~\ref{fig:Causal_TCNF} using causal matrices (while at the same time satisfying~\eqref{eq:bTbT_NF}).
However, it can be achieved by using error feedback, as we show next.

Consider the scheme shown in Fig.~\ref{fig:Causal_TC}, where $\bA\in\Rl^{N\times N}$ is lower triangular and $\bF\in\Rl^{N\times N}$ is strictly lower triangular.
\begin{figure}[htp]
 \centering
 \input{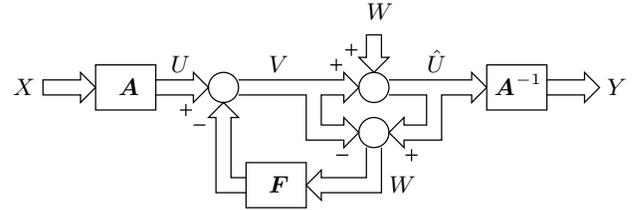}
 \caption{A causal transform coding scheme with error feedback.}
 \label{fig:Causal_TC}
\end{figure}
Again, a sufficient and necessary condition to have $\frac{1}{N}I(X;Y)=R^{\perp}(D)$ is that $\bK_{Z}=\bK_{Z^{\star}}$, see~\eqref{eq:Kzstardef2}, i.e.,
%
\begin{align}
\sigsq_{W} \bA^{-1}(\bI-\bF)\left[\bA^{-1}(\bI-\bF)\right]^{T} = \bK_{Z^{\star}}\nonumber\\
\iff
(\bI-\bF)(\bI-\bF)^{T} = \bA \bK_{Z^{\star}} \bA^{T}/\sigsq_{W}. \label{eq:I_F_I_F}
\end{align}
%
%
On the other hand, equality in~\eqref{eq:Ineq_key} is achieved only if
\begin{align}
 \bK_{\hat{U}} = \bA\bK_{X}  \bA^{T}+  \sigsq_{W}(\bI - \bF)(\bI - \bF)^{T}= \bD,\label{eq:bI}
\end{align}
for some diagonal matrix $\bD$ with positive elements.
If we substitute the Cholesky factorization $\bK_{Z^{\star}}=\bL \bL^{T}$ into~\eqref{eq:I_F_I_F}, we obtain
$
 (\bI-\bF)(\bI-\bF)^{T} = \bA \bL\bL^{T}\bA^{T}/\sigsq_{W}
$,
and thus
%
\begin{align}
 \bA = \sigma_{W}(\bI-\bF)\bL^{-1}.\label{eq:bA_short}
\end{align}
Substituting the above into~\eqref{eq:bI} we obtain
%
\begin{align}
 \bD = \sigsq_{W}(\bI-\bF)\left[ \bL^{-1}\bK_{X}\bL^{-T} + \bI\right](\bI-\bF)^{T}\label{eq:the_one}
\end{align}
%
Thus, there exist%
\footnote{For any positive definite matrices $\bK_{X}$ and $\bK_{Z^{\star}}=\bL\bL^{T}$, there exists a \emph{unique} matrix $\bF$ having zeros on its main diagonal that satisfies~\eqref{eq:the_one}, see~\cite{derque08}.
} 
$\bA$ and $\bF$ satisfying~\eqref{eq:I_F_I_F} and~\eqref{eq:bI}.
Substitution of~\eqref{eq:bA_short} into~\eqref{eq:the_one} yields
$
 \bD = \bA\left(\bK_{X} + \bK_{Z^{\star}}\right) \bA^{T}
$,
and 
$ 
\log \abs{\bD} = 2\log\abs{\bA} + \log\abs{\bK_{x}  +  \bK_{Z^{\star}}}
$.
From~\eqref{eq:I_F_I_F} and the fact that $\abs{\bI-\bF}=1$ it follows that
$
 \abs{\bA}^{2} = \sigsq_{W} / \abs{\bK_{Z^{\star}}}
$,
and therefore%
\footnote{The last equality in~\eqref{eq:esta} follows from the expression for $R^{\perp}(D)$ for Gaussian vector sources derived in~\cite{derost08}.}
%
\begin{align}
\tfrac{1}{N}&\sumfromto{k=1}{N}I(V_{k};\hat{U}_{k}) 
=
\tfrac{1}{N}\sumfromto{k=1}{N}\log\Big(\tfrac{\sigsq_{\hat{U}_{k}}}{\sigsq_{W}}\Big)
=\tfrac{1}{2N} \log \tfrac{\abs{\bD}}{\sigsq_{W}} \nonumber\\
&=  \tfrac{1}{2N}\log\abs{\bK_{x}  +  \bK_{Z^{\star}}}  - \tfrac{1}{2N}\log\abs{ \bK_{Z^{\star}}}\nonumber\\
&= \tfrac{1}{2N}\!\sumfromto{k=1}{N}\!\log 
\left( 
\tfrac{\hsqrt{\lambda_{k}^{2} +\lambda_{k}\alpha } + \lambda_{k}}{ \hsqrt{\lambda_{k}^{2} +\lambda_{k}\alpha } - \lambda_{k}}
\right)
= R^{\perp}(D),\label{eq:esta}
\end{align}
thus achieving equality in~\eqref{eq:Ineq_key}.

We have seen that the use of error feedback allows one to make the average scalar mutual information between the input and output of each AWGN channel in the transform domain equal to $R^{\perp}(D)$.
In the following section we show how this result can be extended to stationary Gaussian processes.

\section{Achieving $R^{\perp}(D)$ by Noise Shaping}\label{sec:noise_shap}
In this section we show that, for any colored stationary Gaussian stationary source and for any positive distortion, $R^{\perp}(D)$ can be realized by noise shaping, and that $R^{\perp}(D)$ is achievable using \emph{memory-less} entropy coding.

\subsection{Realization of $R^{\perp}(D)$ by Noise-Shaping}
The fact that $R^{\perp}(D)$ can be realized by the additive colored  Gaussian noise test channel of Fig.~\ref{fig:forwartdtc} suggests that $R^{\perp}(D)$ could also be achieved by an 
\emph{additive white Gaussian noise} (AWGN) channel embedded in a noise-shaping feedback loop, see Fig.~\ref{fig:Block_diag_NSDPCM}.
In this figure,
$\set{X_{k}}$ is a Gaussian stationary process with PSD $S_{x}\ejw$.
The filters $A(z)$ and $F(z)$ are LTI.
The AWGN channel is situated between $V$ and $\hat{U}$, where white Gaussian noise $\set{W_{k}}$, independent of $\set{X_{k}}$,  is added.
The reconstructed signal $Y$ is obtained by passing $\hat{U}$ through the filter $A(z)^{-1}$, yielding the reconstruction error 
$Z_{k}= Y_{k}-X_{k}$.

\begin{figure}[htp]
 \centering
 \input{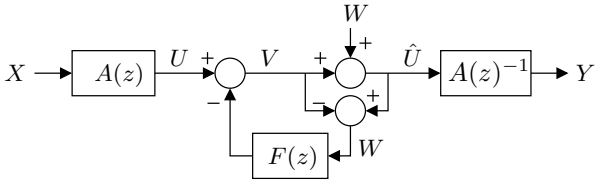}
 \caption{Test channel built by embedding the AWGN channel $\hat{U}_{k}= V_{k}+W_{k}$ in a noise feedback loop.}
 \label{fig:Block_diag_NSDPCM}
\end{figure}

The following  theorem states that, for this scheme,  the \emph{scalar} mutual information across the AWGN channel can actually equal 
$R^{\perp}(D=\sigsq_{Z})$.
%
%
%
\begin{thm}\label{thm:Realizable_FQ}
\emph{
Consider the scheme in Fig.~\ref{fig:Block_diag_NSDPCM}.
Let $\set{X_{k}}$,  $\set{W_{k}}$ be independent stationary Gaussian random processes.
Suppose that
 the differential entropy rate of $\set{X_{k}}$ is bounded,
and that
$\set{W_{k}}$ is white.
Then, for every $D>0$, there exist causal and stable filters $A(z)$, $A(z)^{-1}$ and $F(z)$ such that
%
\begin{align}
 I(V_{k};\hat{U}_{k}) = R^{\perp}(D), \textrm{ where $D\eq \sigsq_{Z}$.}\label{eq:scalar_I_equals_Rperp}
\end{align}
}
\end{thm}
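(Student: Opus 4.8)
The plan is to construct the three filters explicitly from canonical spectral factorizations so that two properties hold at once — the reconstruction error $\{Z_k\}$ has precisely the optimal distortion spectrum of~\eqref{eq:Sz_def}, and the channel input $\{\hat U_k\}$ is white — after which the claim falls out of a one-line mutual-information computation. First I would trace the signals through Fig.~\ref{fig:Block_diag_NSDPCM}: with $\{W_k\}$ injected as the additive channel noise, $\hat U = A(z)X + \big(1-F(z)\big)W$ and $Y = A(z)^{-1}\hat U$, so $Z = A(z)^{-1}\big(1-F(z)\big)W$ has $S_Z(\w) = \sigsq_{W}\,\abs{1-F\ejw}^{2}/\abs{A\ejw}^{2}$, while $S_{\hat U}(\w) = \abs{A\ejw}^{2}S_X\ejw + \sigsq_{W}\abs{1-F\ejw}^{2} = \abs{A\ejw}^{2}\big(S_X\ejw + S_Z(\w)\big)$, the cross term vanishing because $X\perp W$. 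Writing $S_{Z^{\star}}(\w)$ for the right-hand side of~\eqref{eq:Sz_def}, I would fix $\alpha$ by $\tfrac{1}{2\pi}\int_{-\pi}^{\pi}S_{Z^{\star}}(\w)\,d\w = D$; such an $\alpha$ exists and is unique because this integral is continuous and strictly increasing in $\alpha$, ranging over $(0,\infty)$ (see~\cite{derost08}).

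Second, I would realize both requirements with a single choice of filters: let $A(z)$ be the minimum-phase transfer function with $\abs{A\ejw}^{2} = c/\big(S_X\ejw + S_{Z^{\star}}(\w)\big)$, and let $F(z) = 1-G(z)$, where $G(z)$ is the \emph{monic} minimum-phase spectral factor of $\abs{A\ejw}^{2}S_{Z^{\star}}(\w)/\sigsq_{W}$. Monicity of $G$ — equivalently, strict causality of $F$, which keeps the feedback loop delay-free — pins down the constant via the Szeg\H{o} formula: $\log c = \log\sigsq_{W} + \tfrac{1}{2\pi}\int_{-\pi}^{\pi}\log\big(S_X+S_{Z^{\star}}\big)\,d\w - \tfrac{1}{2\pi}\int_{-\pi}^{\pi}\log S_{Z^{\star}}\,d\w$. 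With these choices $S_Z = S_{Z^{\star}}$, hence $\sigsq_{Z}=D$, and $S_{\hat U}(\w)\equiv c$, i.e.\ $\{\hat U_k\}$ is white with variance $\sigsq_{\hat U}=c$ — the stationary analogue of forcing $\bK_{\hat U}$ diagonal in~\eqref{eq:bI}. Existence of the three factorizations, and causality and stability of $A$, $A^{-1}$ and $F$, follow from log-integrability of $S_X$, of $S_X+S_{Z^{\star}}$ and of $S_{Z^{\star}}$: the first is exactly the hypothesis that $\{X_k\}$ has finite differential entropy rate, and the other two follow since $S_X+S_{Z^{\star}}\ge S_X$ and since, from~\eqref{eq:Sz_def}, $S_{Z^{\star}}$ is bounded above (by $\alpha/4$) and vanishes no faster than $\hsqrt{S_X}$.

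Third, I would compute the scalar mutual information. Because $F(z)$ is strictly causal and $\{W_k\}$ is white, $V_k$ is a deterministic function of $\{X_j\}_{j\in\Z}$ and $\{W_j\}_{j<k}$ alone, hence independent of $W_k$; since $\hat U_k = V_k + W_k$ and everything is jointly Gaussian, $I(V_k;\hat U_k) = h(\hat U_k) - h(\hat U_k\mid V_k) = h(\hat U_k) - h(W_k) = \tfrac12\log(\sigsq_{\hat U}/\sigsq_{W}) = \tfrac12\log(c/\sigsq_{W})$. Substituting $c$ gives $I(V_k;\hat U_k) = \tfrac{1}{4\pi}\int_{-\pi}^{\pi}\log\!\big[(S_X\ejw+S_{Z^{\star}}(\w))/S_{Z^{\star}}(\w)\big]\,d\w$. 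Finally, the identities $S_X+S_{Z^{\star}} = \tfrac12\hsqrt{S_X}\,\big(\hsqrt{S_X+\alpha}+\hsqrt{S_X}\big)$ and $S_{Z^{\star}} = \tfrac12\hsqrt{S_X}\,\big(\hsqrt{S_X+\alpha}-\hsqrt{S_X}\big)$, both read off from~\eqref{eq:Sz_def}, reduce the integrand to $2\log\big[(\hsqrt{S_X+\alpha}+\hsqrt{S_X})/\hsqrt{\alpha}\big]$, so that $I(V_k;\hat U_k)$ equals the right-hand side of~\eqref{eq:RperpProcDef}, namely $R^{\perp}(D)$ — the stationary counterpart of the eigenvalue computation in~\eqref{eq:esta}.

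The step I expect to be the main obstacle is the second one: checking that the single hypothesis (bounded differential entropy rate of $\{X_k\}$) is exactly what guarantees all three canonical spectral factorizations to be well defined and to produce filters $A$, $A^{-1}$, $F$ that are causal \emph{and} stable, and that normalizing $G$ to be monic is consistent — so that $F$ is genuinely strictly causal and the resulting $A$ has a stable inverse. This comes down to log-integrability bookkeeping near the zeros and in the regions of growth of $S_X$, for which the bounds $S_{Z^{\star}}\le\alpha/4$ and $S_{Z^{\star}}\gtrsim\hsqrt{S_X}$ are the key; once that is in hand, the remaining manipulations are routine.
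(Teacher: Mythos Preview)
Your proposal is correct and follows essentially the same route as the paper: impose the two constraints $S_Z=S_{Z^\star}$ and $S_{\hat U}\equiv\mathrm{const}$, solve for $\abs{A}^{2}$ and $\abs{1-F}^{2}$, invoke Paley--Wiener/spectral factorization for causal stable realizations, fix the free constant via the minimum-phase (monic) normalization of $1-F$, and then read off $I(V_k;\hat U_k)=\tfrac12\log(\sigsq_{\hat U}/\sigsq_{W})=R^{\perp}(D)$. Your Szeg\H{o}-formula normalization of $G$ is exactly the paper's Bode-integral condition $\intpipi{\log\abs{1-F}}=0$, and your expression $\abs{A}^{2}=c/(S_X+S_{Z^\star})$ coincides with the paper's explicit formula for $\abs{A}^{2}$ once one uses $S_X+S_{Z^\star}=\tfrac12\hsqrt{S_X}\big(\hsqrt{S_X+\alpha}+\hsqrt{S_X}\big)$.
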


%
%
\begin{proof}
Consider all possible choices of the filters $A(z)$ and $F(z)$ such that
the obtained sequence $\set{\hat{U}_{k}}$ is white, i.e., such that $S_{\hat{U}}\ejw=\sigsq_{\hat{U}},\,\forallwinpipi$.
From Fig.~\ref{fig:Block_diag_NSDPCM}, this is achieved iff the filters $A(z)$ and $F(z)$ satisfy
%
\begin{align}
 \sigsq_{\hat{U}} = \abs{A\ejw}^{2}S_{X}\ejw + \abs{1-F\ejw}^{2}\sigsq_{W}.\label{eq:uhat_whitwe}
\end{align}
On the other hand, since $\set{W_{k}}$ is Gaussian, a necessary and sufficient condition in order to achieve
$R^{\perp}(D)$ is that 
%
\begin{align}
 S_{Z}\ejw 
&= \abs{1-F\ejw}^{2} \abs{A\ejw}^{-2} \sigsq_{W}\label{eq:Sz}\\
&=\frac{1}{2}\!\left(\! \hsqrt{S_X(\omega)+\alpha} - \hsqrt{S_X(\omega)}\right)\!\hsqrt{S_X(\omega)} \\
& \eq S_{Z^{\star}}\ejw,\quad \forallwinpipi.\label{eq:Szstar2}
\end{align}
This holds iff
$
\abs{A\ejw}^{2} = \sigsq_{W}\abs{1-F\ejw}^{2}/ S_{Z^{\star}}\ejw
$.
Substituting the latter and~\eqref{eq:Szstar2} into~\eqref{eq:uhat_whitwe}, and after some algebra, we obtain
%
\begin{subequations}\label{eq:opt_filters}
\begin{align}
\!\!\!\abs{1\!-\!F\ejw}^{2} \!\!
&= 
\frac{\sigsq_{\hat{U}}}{\sigsq_{W}}\!\!
\left[ 
\frac
{
\!\hsqrt{S_{X}\ejw \!+\! \alpha} -\! \hsqrt{S_{X}\ejw}
}
{\hsqrt{\alpha}}
\right]^{2},
\label{eq:uno_min_F_opt} 
\\
\abs{A\ejw}^{2} 
&=
2\sigsq_{\hat{U}}
\frac{\hsqrt{S_{X}\ejw \!+\! \alpha} -\! \hsqrt{S_{X}\ejw}}{\alpha \hsqrt{S_{X}\ejw}}\label{eq:Aopt}.
\end{align}
\end{subequations}
%
Notice that the functions on the right hand sides of~\eqref{eq:opt_filters} are bounded and positive for all $\w\in\pipi$, and that
a bounded differential entropy rate of $\set{X_{k}}$ implies that $|\int_{-\pi}^{\pi}S_{X}\ejw d\w|<\infty$.
From the Paley-Wiener criterion~\cite{wiepal34} (see also, e.g.,~\cite{proman96}), this implies that $(1-F(z))$, $A(z)$ and $A(z)^{-1}$ can be chosen to be stable and causal.
Furthermore, recall that for any fixed $D>0$, the corresponding value of $\alpha$ is unique (see~\cite{derost08}), and thus fixed.
Since the variance $\sigsq_{W}$ is also fixed, it follows that each frequency response magnitude $\abs{1-F\ejw}$ that satisfies~\eqref{eq:uno_min_F_opt} can be associated to a unique value of
$\sigsq_{\hat{U}}$.
Since $F(z)$ is strictly causal and stable, the minimum value of the variance $\sigsq_{\hat{U}}$ is achieved when
%
\begin{align}
 \intpipi{\log\abs{1-F\ejw}} =0,\label{eq:Bode}
\end{align}
i.e., if $1-F(z)$ has no zeros outside the unit circle (equivalently, if $1-F(z)$ is minimum phase), see, e.g.,~\cite{serbra97}.
If we choose in~\eqref{eq:uno_min_F_opt} a filter $F(z)$ that satisfies~\eqref{eq:Bode}, and then we take the logarithm and integrate both sides of~\eqref{eq:uno_min_F_opt}, we obtain 
%
\begin{align*}
\frac{1}{2}&
\!\log\!\left(\frac{\sigsq_{\hat{U}}}{\sigsq_{W}}\!\right)
 =
\frac{1}{2\pi}\!\Intfromto{-\pi}{\pi}{\!\log
\left[\frac{\hsqrt{\alpha}}{\hsqrt{S_{X}\ejw \!+\! \alpha} - \hsqrt{S_{X}\ejw}}
\right] }d\w\\
&=
\frac{1} {2\pi}\!
\Intfromto{-\pi}{\pi} {\!\log\!
\left[
\frac
{\hsqrt{S_{X}\ejw \!+\! \alpha} + \hsqrt{S_{X}\ejw}}
{\hsqrt{\alpha}}
\right] }d\w
=
R^{\perp}(D).
\end{align*}
where~\eqref{eq:RperpProcDef} has been used. 
We then have that 
\begin{align*}
R^{\perp}(D) 
&\overset{\hphantom{(a)}}{=} \frac{1}{2}\log\Big(\frac{\sigsq_{\hat{U}}}{\sigsq_{W}}\Big) 
= \frac{1}{2}\log(2\pi\expo{}\sigsq_{\hat{U}}) -\frac{1}{2}\log(2\pi\expo{}\sigsq_{W}) \\
&\overset{(a)}{=} h(\hat{U}_{k}) - h(W_{k})\\
&\overset{(b)}{=} h(\hat{U}_{k}) - h(V_{K}+ W_{k}| V_{k}  )
= I(V_{k};\hat{U}_{k}),
\end{align*}
where $(a)$ follows from the Gaussianity of $W_{k}$ and $\hat{U}_{k}$, 
and 
$(b)$ from the fact that $W_{k}$ is independent of $V_{k}$ (since $F$ is strictly causal).
This completes the proof.
Alternatively,
%
%
%
\begin{align*}
 R^{\perp}(&D)
\overset{(a)}{\leq} 
\Irate{X}{Y}\\
&\overset{\hphantom{(a)}}{=}
\bar{h}(A^{-1}\set{\hat{U}_{k}}) 
- 
\bar{h}(\set{X_{k}} + A^{-1}(1-F)\set{W_{k}}|\set{X_{k}})\\
&\overset{\hphantom{(b)}}{=}
\bar{h}(A^{-1}\set{\hat{U}_{k}})  
- 
\bar{h}( A^{-1}(1-F)\set{W_{k}}  )\\
&\overset{(b)}{=}
\bar{h}(\set{\hat{U}_{k}})  
- 
\bar{h}((1-F)\set{W_{k}}  )\\
&\overset{(c)}{\leq}
h(\hat{U}_{k}|U_{k}^{-})  
- 
h(W_{k}) \overset{(d)}{\leq}
h(\hat{U}_{k})  
- 
h(W_{k})\\
&\overset{(e)}{=} h(\hat{U}_{k}) - h(V_{K}+ W_{k}| V_{k}  )
= I(V_{k};\hat{U}_{k}),
\end{align*}
In~$(a)$, equality is achieved iff the right hand side of~\eqref{eq:Sz} equals~\eqref{eq:uno_min_F_opt}, i.e., if $Z$ has the optimal PSD.
Equality~$(b)$ holds because $|{\int_{-\pi}^{\pi}\log\abs{A\ejw}}|d\w<\infty$, which follows from~\eqref{eq:Aopt}.
The fact that $\set{\hat{U}_{k}}$ is stationary has been used in~$(c)$, wherein equality is achieved iff $\abs{1-F}$ is minimum phase, i.e., if~\eqref{eq:Bode} holds.
Equality in~$(d)$ holds if an only if the elements of $\set{\hat{U}_{k}}$ are independent, which, from the Gaussianity of $\set{\hat{U}_{k}}$, is equivalent to~\eqref{eq:uhat_whitwe}.
Finally, $(e)$ stems from the fact that
$W_{k}$ is independent of $V_{k}$.
\end{proof}
Notice that the key to the proof of Theorem~\ref{thm:Realizable_FQ} relies on knowing a priori the PSD of the end to end distortion required to realize $R^{\perp}(D)$. 
Indeed, one could also use this fact to realize $R^{\perp}(D)$ by embedding the AWGN in a DPCM feedback loop, and then following a reasoning similar to that in~\cite{zamkoc08}.

\subsection{Achieving $R^{\perp}(D)$ Through Feedback Quantization}
In order to achieve $R^{\perp}(D)$ by using a quantizer instead of an AWGN channel, one would require the quantization errors to be Gaussian. 
This cannot be achieved with scalar quantizers.
However, as we have seen in~\ref{sec:background}, dithered lattice quantizers are able to yield quantization errors approximately Gaussian as the lattice dimension tends to infinity.
The sequential (causal) nature of the feedback architecture does not immediately allow for the possibility of using vector quantizers. However, if several sources are to be processed simultaneously, we can overcome this difficulty by using an idea suggested in~\cite{zamkoc08} where the sources are processed in parallel by separate feedback quantizers. 
The feedback quantizers are operating independently of each other except that their scalar quantizers are replaced by a single vector quantizer. 
If the number of parallel sources is large, then the vector quantizer guarantees that the marginal distributions of the individual components of the quantized vectors becomes approximately Gaussian distributed. Thus, due to the dithering within the vector quantizer, each feedback quantizer observes a sequence of i.i.d.\ Gaussian quantization noises. 
Furthermore, the effective coding rate (per source) is that of a high dimensional entropy constrained dithered quantizer (per dimension).

The fact that the scalar mutual information between $V_{k}$ and $\hat{U}_{k}$ 
equals the mutual information rate between $\set{V_{K}}$ and $\set{\hat{U}_{k}}$
in each of the parallel coders  implies that $R^{\perp}(D)$ can be achieved by using a memoryless entropy coder.

\section{Rate Loss with Dithered Feedback Quantization}
The results presented in  sections~\ref{sec:realiz_TC} and~\ref{sec:noise_shap} suggest that 
if a test channel embedding an AWGN channel realizes $R^{\perp}(D)$, then a source coder obtained by replacing the AWGN channel by a dithered, finite dimensional lattice quantizer, would exhibit a rate close to $R^{\perp}(D)$.

The next theorem, whose proof follows the line of the results given in~\cite[sec.~VII]{zamkoc08}, provides an upper bound on the rate-loss incurred in this case.
\begin{thm}
\emph{
Consider a source coder with a finite dimensional subtractively dithered lattice quantizer $\Q$.
If when replacing the quantizer by an AWGN channel  the scalar mutual information across the channel equals $R^{\perp}(D)$, then
the scalar entropy of the quantized output exceeds $R^{\perp}(D)$ by at most  $0.254$ bit/dimension.
}
\end{thm}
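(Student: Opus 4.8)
The plan is to combine the known per-dimension rate-redundancy bound for entropy-constrained dithered lattice quantizers with the structure established in the preceding sections, where replacing the AWGN channel by a lattice quantizer leaves the ``effective channel'' an additive-noise channel whose noise is the (shaped) dither. First I would recall from Section~\ref{sec:background} that when the AWGN channel embedded in the test channel (be it the transform coder of Fig.~\ref{fig:Causal_TCNF}, the feedback transform coder of Fig.~\ref{fig:Causal_TC}, or the noise-shaping loop of Fig.~\ref{fig:Block_diag_NSDPCM}) is replaced by an $N$-dimensional subtractively dithered lattice quantizer $\Q$, the conditional (on the dither) entropy per dimension of the quantized output equals $\tfrac1N I(\,\cdot\;;\,\cdot+E')$, the mutual information across an additive-noise channel whose noise $E'$ is distributed as the (possibly shaped) dither $-\nu$, with $\bK_{E'}$ matching the covariance $\bK_{Z^\star}$ that the AWGN channel was designed to produce. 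By hypothesis the corresponding \emph{scalar} mutual information across the AWGN channel equals $R^{\perp}(D)$, so it suffices to bound the gap between this dithered-quantizer rate and $R^{\perp}(D)$.

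Next I would invoke Lemma~\ref{lem:excessrate} with $X'$ the channel input and $Z'=E'$ the dither noise (which is independent of the input because the dither is subtractive): this gives
\begin{align}
\tfrac1N I(X';X'+E') \leq \tfrac1N I(X_G; X_G+E'_G) + \tfrac1N D(E'\Vert E'_G),\nonumber
\end{align}
where $E'_G$ is the Gaussian vector with the same covariance. The first term on the right is exactly $R^{\perp}(D)$ by construction of the test channel (it is the Gaussian mutual information across the additive colored-noise channel with distortion PSD $S_{Z^\star}$, which equals $R^\perp(D)$ by~\eqref{eq:Rperp_equations}). So the rate loss is at most the per-dimension divergence $\tfrac1N D(E'\Vert E'_G)$ between the shaped dither and a Gaussian of matching covariance. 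The remaining — and main — task is to bound this divergence by the \emph{space-filling loss} of the lattice, uniformly over the shaping transformation $\bM$. By Lemma~\ref{lem:shapedlattice}, shaping is divergence-invariant, so $\tfrac1N D(f_{E'}\Vert f_{E'_G}) = \tfrac1N D(f_{E}\Vert f_{E_G})$ where $E\sim\mathcal{U}(V_0)$ is the \emph{white} dither over the base Voronoi cell. Thus everything reduces to bounding the divergence between a uniform distribution on a single Voronoi cell and the Gaussian with the same covariance.

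For that last step I would express $\tfrac1N D(f_E\Vert f_{E_G}) = \tfrac1N\big(h(E_G) - h(E)\big) = \tfrac12\log\!\big(2\pi e\,\tfrac1N\tr\bK_E\big) - \tfrac1N h(E)$, and note that for a lattice quantizer with cell $V_0$ the differential entropy is $h(E)=\log(\mathrm{Vol}(V_0))$ while $\tfrac1N\tr\bK_E = \epsilon = G(\Lambda)\,\mathrm{Vol}(V_0)^{2/N}$, where $G(\Lambda)$ is the normalized second moment of the lattice. Substituting, the cell volume cancels and one is left with $\tfrac1N D(f_E\Vert f_{E_G}) = \tfrac12\log\!\big(2\pi e\, G(\Lambda)\big)$, which is precisely the definition of the space-filling loss of $\Lambda$. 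Finally, choosing $\Lambda$ to be a good lattice for quantization (or in the scalar case $\Lambda=\Z$, for which $G(\Z)=1/12$), this quantity is at most $\tfrac12\log(2\pi e/12) \approx 0.254$ bit/dimension; this bound is nonincreasing in dimension and attained by scalar uniform quantization, which establishes the claim. The main obstacle is the middle paragraph: one must be careful that the scheme-specific reductions (especially for the feedback loop of Fig.~\ref{fig:Block_diag_NSDPCM}, where causality and the strict-lower-triangularity / strict-causality of $\bF$, $F(z)$ matter) really do leave a \emph{memoryless} additive-dither channel per dimension with the dither independent of that dimension's input, so that Lemma~\ref{lem:excessrate} applies scalar-wise; the divergence computation itself is then routine.
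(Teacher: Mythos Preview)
Your proof is correct and follows essentially the same route as the paper: invoke the Zamir--Feder identity $H(\Q(V_k+\nu_k)\mid\nu_k)=I(V_k';\hat U_k')$, apply Lemma~\ref{lem:excessrate} to bound this by $R^\perp(D)$ plus the divergence between the dither and its Gaussian match, and then bound that divergence. The paper's proof is terser---it works scalar-wise with the white channel noise $W'$ (so Lemma~\ref{lem:shapedlattice} is not invoked) and leaves the identification of $D(W_k'\Vert W_k)$ with the space-filling loss $\tfrac12\log(2\pi e\,G(\Lambda))\le 0.254$ entirely implicit, whereas you spell this out; one small slip in your write-up is that in the embedded-AWGN architectures of Sections~\ref{sec:realiz_TC}--\ref{sec:noise_shap} the dither covariance matches $\sigma_W^2\bI$, not $\bK_{Z^\star}$, though this does not affect the argument.
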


\begin{proof}
Let $W$ be the noise of the AWGN channel, and $V$ and $\hat{U}$ denote the channel input and output signals.
From the conditions of the theorem, we have that
%
\begin{align}
 I(V_{k};\hat{U}_{k}) = R^{\perp}(D).\label{eq:Iscalar_eq_Rperp}
\end{align}
If we now replace the AWGN by a dithered quantizer with subtractive dither $\nu$, such that the quantization noise $W'$ is obtained with the same first and second order statistics as $W$, then the end to end MSE remains the same.
The corresponding signals in the quantized case, namely $V'$ and $\hat{U}'$, will also have the same second order statistics as their Gaussian counterparts $V$ and $\hat{U}$.
Thus, by using Lemma~\ref{lem:excessrate} we obtain
%
\begin{align}
 I(V_{k}';\hat{U}_{k}') 
&\overset{ }{\leq} R^{\perp}(D) + D(\hat{U}_{k}'\Vert \hat{U}_{k})\label{eq:I_rate_loss}.
\end{align}
Finally, from~\cite[Theorem~1]{zamfed92}, we have that
%
$
 H(\Q(V_{k}+\nu_{k})|\nu_{k} )= I(V_{k}';\hat{U}_{k}')
$.
Substitution of~\eqref{eq:I_rate_loss} into this last equation yields the result. 
\end{proof}

\section{Conclusions}
We have proved the achievability of $R^{\perp}(D)$ by using lattice quantization with subtractive dither.
We have shown that $R^{\perp}(D)$ can be realized causally, and that the use of feedback allows one to achieve $R^{\perp}(D)$ by using memoryless entropy coding. 
We also showed that the scalar entropy of the quantized output when using optimal finite-dimensional dithered lattice quantization exceeds $R^{\perp}(D)$ by at most $0.254$ bits/dimension.
\balance
%
%
\bibliographystyle{\BibPath/IEEEtran}

\end{document}